\newcommand{\B}{\mathbb{B}}
\newcommand{\N}{\mathbb{N}}
\newcommand{\BB}[1][n]{\B^{\B^{#1}}}
\newtheorem{lemma}{Lemma}[section]
\title{Tunable Quantum Neural Networks in the QPAC-Learning Framework}
\author{Viet Pham Ngoc
\institute{Imperial College London\\London, United Kingdom}
\email{viet.pham-ngoc17@imperial.ac.uk}
\and
David Tuckey 
\institute{Imperial College London\\London, United Kingdom}
\email{david.tuckey17@imperial.ac.uk}
\and 
Herbert Wiklicky
\institute{Imperial College London\\London, United Kingdom}
\email{h.wiklicky@imperial.ac.uk}
}
\begin{document}
\maketitle

\begin{abstract}
In this paper, we investigate the performances of tunable quantum neural networks in the Quantum Probably Approximately Correct (QPAC) learning framework. Tunable neural networks are quantum circuits made of multi-controlled $\mathbf{X}$ gates. By tuning the set of controls these circuits are able to approximate any Boolean functions. This architecture is particularly suited to be used in the QPAC-learning framework as it can handle the superposition produced by the oracle. In order to tune the network so that it can approximate a target concept, we have devised and implemented an algorithm based on amplitude amplification. The numerical results show that this approach can efficiently learn concepts from a simple class.
\end{abstract}

\section{Introduction}
Machine learning is believed to be an application of quantum computing that will yield promising results. It is the reason why, in the past years, significant efforts have been put into developing machine learning techniques suited to quantum computers. One interesting approach to this task consists in adapting existing classical techniques \cite{Rebentrost2018, Tacchino2019, Schuld2014, Lloyd2013} to take advantage of quantum properties and gain some speed-up. 

Introduced by \cite{Valiant1984}, probably approximately correct (PAC) learning provides a mathematical framework to analyse classical machine learning techniques. This framework revolves around the existence of an oracle that provides samples drawn from the instance space. These examples are then used as examples for an algorithm to learn a target concept. The efficiency of this learning algorithm can then be characterised thanks to its sampling complexity. Given the momentum gained by quantum techniques for machine learning, it seems natural for a quantum equivalence of PAC-learning to be introduced. This is exactly what was done in \cite{Bshouty1998} with QPAC-learning. In this framework, instead of providing samples from the instance space, the oracle generates a superposition of all the examples. Similarly to PAC-learning, QPAC-learning can be used to evaluate the learning efficiency of quantum algorithm.

In this paper we are using this framework to study a learning algorithm based on quantum amplitude amplification \cite{Brassard2002}. This fundamental quantum procedure allows us both to compare the error rate to a threshold as well as increase the probability of measuring the errors produced by the learner. These measured errors are then used to tune the learner in order to decrease the error rate. In our case, this learner is a tunable quantum neural network \cite{Pham2020} which is essentially a quantum circuit made of multi-controlled $\mathbf{X}$ gates. We prove that this setup is able to learn efficiently by implementing it and testing it against a simple class of concepts.


\section{Related Works}\label{section:related}
Studied in  \cite{Pham2020}, the tunable quantum neural networks (TNN) are an interesting kind of quantum circuits in the sense that they naturally express Boolean functions hence are able to approximate any target function. At the core of this concept is the fact that a Boolean function can be expanded into a polynomial form called the algebraic normal form (ANF). This polynomial form can then easily be transposed into a quantum circuit. 

Let $u = u_0 \ldots u_{n-1} \in \B^n$, we denote $\mathbf{1}_u = \{i \in [0,n-1] \ | \ u_i = 1\}$ and for $x \in \B^n$, $x^u = \prod_{i \in \mathbf{1}_u}x_i$. Let $f \in \BB$ then its algebraic normal form is:
\begin{equation}\label{eq:anf}
    f(x) = \bigoplus_{u \in \B^n}{\alpha_ux^u}
\end{equation}
Where $\alpha_u \in \{0,1\}$ denotes the absence or the presence of the monomial $x^u$ in the expansion. As an example, let us consider the function $g \in \BB[2]$ defined by $g(x_0,x_1) = 1 \oplus x_1 \oplus x_0.x_1$. With the previously introduced notations, its ANF can be written: $g(x) = x^{00} \oplus x^{01} \oplus x^{11}$ with $\alpha_{00}=\alpha_{01}=\alpha_{11}=1$ and $\alpha_{10}=0$. Using this form, any Boolean function can be expressed with a quantum circuit made of multi-controlled $\mathbf{X}$ gates. 

Let $f \in \BB$ with ANF as in (\ref{eq:anf}) and consider a quantum circuit of $n+1$ qubits with the first $n$ qubits being denoted $q_0, \ldots, q_{n-1}$ and the last one being the ancillary qubit, denoted $a$. Now for $u \in \B^n$ such that $\alpha_u = 1$ in the ANF of $f$, place an $\mathbf{X}$ gate controlled by the qubits $\{q_i \ | \ i \in \mathbf{1}_u\}$ on the ancillary qubit. Let $\mathbf{QC}$ be the resulting circuit, then it is such that for $x \in \B^n$ and $b \in \B$:
$$
    \mathbf{QC}\ket{x}\ket{b} = \ket{x}\ket{b \oplus f(x)}
$$
Let us consider the function $g \in \BB[2]$ introduced previously, then the circuit pictured in Figure \ref{fig:ex_anf} is expressing $g$.
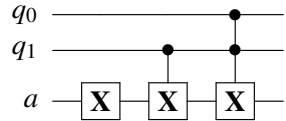
\begin{figure}[h]
    \centering
    $$
    \Qcircuit @C=1em @R=1em{
        \lstick{q_0} & \qw & \qw & \ctrl{1} & \qw\\
        \lstick{q_1} & \qw & \ctrl{1} & \ctrl{1} & \qw\\
        \lstick{a} & \gate{\mathbf{X}} & \gate{\mathbf{X}} & \gate{\mathbf{X}} & \qw
    }
    $$
    \caption{Circuit expressing $g(x) = 1 \oplus x_1 \oplus x_0.x_1$.}
    \label{fig:ex_anf}
\end{figure}
\\
A tunable neural network is then a quantum circuit of this type for which the set of controlled $\mathbf{X}$ gates can be tuned so that the expressed function is an approximation of a target function. Now let $f \in \BB$ and consider the network $\mathbf{TNN}(f)$ expressing $f$, then for a superposition of the form $\ket{\phi} = \sum_{x \in \B^n}{d_x\ket{x}}\ket{0}$ we have:
$$
    \mathbf{TNN}(f)\ket{\phi} = \sum_{x \in \B^n}{d_x\ket{x}}\ket{f(x)}
$$
Which is reminiscent of the output from the oracle encountered in the QPAC-learning framework.  

QPAC-Learning has been introduced in \cite{Bshouty1998} and is a quantum version of the work presented in \cite{Valiant1984}. In this framework, we call $\mathcal{C} \subseteq \BB$ a class of concept and the aim is to learn $c \in \mathcal{C}$. A probability distribution $D$ is placed over $\B^n$ and for a hypothesis $h \in \BB$ the error is defined by 
\begin{equation}\label{eq:err}
    err_D(h,c) = P_{x \sim D}(h(x) \neq c(x))
\end{equation}
For $0 < \epsilon < \frac{1}{2}$ and $0 < \delta < \frac{1}{2}$, the goal is then to produce $h \in \mathcal{C}$ such that:
\begin{equation}\label{eq:pac}
    P\left(err_D(h,c) < \epsilon \right) > 1 - \delta
\end{equation}
A class of concept $\mathcal{C}$ is said to be PAC-learnable with an algorithm $A$ if for $c \in \mathcal{C}$, for all distribution $D$ and for $0 < \epsilon < \frac{1}{2}$, $0 < \delta < \frac{1}{2}$, $A$ will output a hypothesis $h \in \BB$ verifying (\ref{eq:pac}).

During the learning process, we are given access to an oracle $\mathbf{EX}(c,D)$ such that each call to this oracle will produce the superposition $\ket{\Psi(c,D)}$:
$$
    \ket{\Psi(c,D)} = \sum_{x \in \B^n}{\sqrt{D(x)}\ket{x}\ket{c(x)}}
$$
For the rest of the paper we will assume that $\mathbf{EX}(c,D)$ is a quantum gate such that:
$$
    \mathbf{EX}(c,D)\ket{0} = \ket{\Psi(c,D)}
$$
With these notations, for a concept $c$ and a hypothesis $h$ we have:
$$
     err_D(h,c) = \sum_{h(x) \neq c(x)}{D(x)}
$$

A learning algorithm is then said to be an efficient PAC-learner when the number of calls to $\mathbf{EX}$ that are necessary to attain (\ref{eq:pac}) is polynomial in $\frac{1}{\epsilon}$ and $\frac{1}{\delta}$.

The learning algorithm presented in this paper is based on quantum amplitude amplification. Introduced by \cite{Brassard2002} the quantum amplitude amplification algorithm is a generalisation of Grover's algorithm \cite{Grover1996}. Let $G$ be the set of states we want to measure and $\mathcal{A}$ be a quantum circuit such that $\mathcal{A}\ket{0} = \ket{\Phi}$ with:
$$
    \ket{\Phi} = \sum_{x \notin G}{d_x\ket{x}} + \sum_{x \in G}{d_x\ket{x}}
$$
This state can be rewritten as
\begin{equation}\label{eq:Phi}
    \ket{\Phi} = \cos(\theta)\ket{\Phi_B} + \sin(\theta)\ket{\Phi_G}
\end{equation}
With $\theta \in \left[0,\frac{\pi}{2}\right]$, $\cos(\theta) = \sqrt{\sum_{x \notin G}{|d_x|^2}}$, $\ket{\Phi_B} = \frac{1}{\cos(\theta)}\sum_{x \notin G}{d_x\ket{x}}$, $\sin(\theta) = \sqrt{\sum_{x \in G}{|d_x|^2}}$ and $\ket{\Phi_G} = \frac{1}{\sin(\theta)}\sum_{x \in G}{d_x\ket{x}}$. We also have $\braket{\Phi_B}{\Phi_G} = 0$

Now let $\mathcal{X}_G$ be such that:
$$
    \mathcal{X}_G\ket{x} = \begin{cases}
                    -\ket{x} & \text{if } x \in G \\
                    \phantom{-}\ket{x} & \text{otherwise}
                    \end{cases}
$$
Then we have $\mathcal{X}_G\ket{\Phi_G} = -\ket{\Phi_G}$ and $\mathcal{X}_G\ket{\Phi_B} = \ket{\Phi_B}$. We also define $\mathcal{X}_0 = \mathbf{I}-2\ket{0}\bra{0}$ then by denoting $\mathbf{Q} = -\mathcal{A}\mathcal{X}_0\mathcal{A}^{-1}\mathcal{X}_G$, $\mathbf{Q}$ is the diffusion operator and acts on $\ket{\Phi}$ in the following way:
$$
    \mathbf{Q}^m\ket{\Phi} = \cos\left((2m+1)\theta\right)\ket{\Phi_b} + \sin\left((2m+1)\theta\right)\ket{\Phi_g} \text{ for }m \in \N
$$
Suppose we know $\theta$, then by choosing $m$ such that $(2m+1)\theta \approx \frac{\pi}{2}$, for example $m = \left\lfloor \frac{1}{2}\left(\frac{\pi}{2\theta}-1\right) \right\rfloor$, we are ensured that when measuring, a state in $G$ will almost certainly be measured.

This amplitude amplification procedure is at the heart of numerous quantum amplitude estimation algorithms \cite{Brassard2002,Nakaji2020, Aaronson2020}. In this setup, a state of a form similar to that of (\ref{eq:Phi}) is given but $\theta \in \left[0,\frac{\pi}{2}\right]$ is unknown and these algorithms seek to approximate $\sin(\theta)$. Where \cite{Brassard2002} makes use of quantum Fourier transforms an integrant part of the algorithm, \cite{Nakaji2020,Aaronson2020} do without while still maintaining quantum speed-up. If $a=\sin(\theta) \in [0,1]$ is the amplitude to be evaluated and $\epsilon,\delta>0$, then these algorithms output an estimation $\tilde{a}$ such that $P(|\tilde{a}-a|< a\epsilon)>1-\delta$.


\section{TNN in the QPAC-Learning Framework}
As said previously, tunable networks are particularly well suited to be employed in the QPAC-learning framework. Let $\mathcal{C} \subseteq \BB$ be a class of concept and $c \in \mathcal{C}$. Let $D$ be a probability distribution over $\B^n$ and $\mathbf{EX}(c,D)$ the oracle such that:
$$
    \mathbf{EX}(c,D)\ket{0} = \ket{\Psi(c,D)} = \sum_{x \in \B^n}{\sqrt{D(x)}\ket{x}\ket{c(x)}}
$$
Now let $\mathbf{TNN}(h)$ be a tunable neural network expressing $h \in \BB$ in its current state. Then
we have:
\begin{align*}
    \ket{\Phi}  &= \mathbf{TNN}(h)\ket{\Psi(c,D)} \\
                &= \sum_{x \in \B^n}{\sqrt{D(x)}\ket{x}\ket{c(x) \oplus h(x)}}\\
                &= \sum_{h(x)=c(x)}{\sqrt{D(x)}\ket{x}\ket{0}} + \sum_{h(x) \neq c(x)}{\sqrt{D(x)}\ket{x}\ket{1}}
\end{align*}
The corresponding circuit is given in Figure \ref{fig:circuit_Phi}.
\begin{figure}[h]
    \centering
    $$
    \Qcircuit @C=1em @R=1em{
    \lstick{\ket{0}^{\otimes n}} & \multigate{1}{\mathbf{EX}(c,D)} & \multigate{1}{\mathbf{TNN}(h)} & \qw \\
    \lstick{\ket{0}^{\phantom{\otimes n}}} & \ghost{\mathbf{EX}(c,D)} & \ghost{\mathbf{TNN}(h)} & \qw
    }
    $$
    \caption{Circuit resulting in the state $\ket{\Phi}$.}
    \label{fig:circuit_Phi}
\end{figure}
\\
The state $\ket{\Phi}$ can then be rewritten as:
$$
    \ket{\Phi} = \cos(\theta_e)\ket{\phi_g}\ket{0} + \sin(\theta_e)\ket{\phi_e}\ket{1}
$$
With $\braket{\phi_g,0}{\phi_e,1} = 0$. This state is similar to the one encountered in the amplitude amplification procedure. Because $\sin(\theta_e) = \sqrt{\sum_{h(x) \neq c(x)}{D(x)}}$, we also have
$$
    \sin^2(\theta_e) = \sum_{h(x) \neq c(x)}{D(x)} = err_D(h,c)
$$
In order to use tunable neural networks in the QPAC-learning framework we thus need a mean to estimate $\sin^2(\theta_e)$ and compare it to $0< \epsilon < 1/2$. If the error is smaller than $\epsilon$, stop. Otherwise, tune the network so that it expresses another hypothesis $h'$ and repeat. The algorithm proposed in this paper does just that by using amplitude amplification.


\section{Tuning Algorithm}
At its core, the algorithm is similar to quantum amplitude estimation, the difference being that it compares the unknown amplitude $err_D(h,c)$ to $\epsilon$ instead of estimating the amplitude to a relative error of $\epsilon$. Let $\theta_{err}$ and $\theta_{\epsilon}$ such that $\sin^2(\theta_{err}) = err_D(h,c)$ and $\sin^2(\theta_{\epsilon}) = \epsilon$. Because $D$ is a distribution we can take $\theta_{err} \in \left[0,\frac{\pi}{2}\right]$ and because $0<\epsilon<\frac{1}{2}$ we can assume $\theta_{\epsilon} \in \left]0,\frac{\pi}{4}\right[$. There are then two possible cases: either $err_D(h,c) \geq \epsilon$ or $err_D(h,c) < \epsilon$. Because $\sin$ is increasing on $\left[0,\frac{\pi}{2}\right]$, these translate into $\theta_{err} \geq \theta_{\epsilon}$ or $\theta_{err} < \theta_{\epsilon}$ respectively. Now let $m_{\text{max}}$ be the smallest integer such that $(2m_{\text{max}}+1)\theta_{\epsilon} \geq \frac{\pi}{4}$, we introduce this following simple lemma:
\begin{lemma}\label{lem:inf}
    Let $\theta \in \left[0, \frac{\pi}{2}\right]$. If for all $m \leq m_{\text{max}}$, we have $\sin^2\left((2m+1)\theta\right) < \frac{1}{2}$ then $\theta < \theta_{\epsilon}$
\end{lemma}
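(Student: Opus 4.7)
The plan is to prove the contrapositive: assuming $\theta \geq \theta_\epsilon$, I will exhibit an $m \leq m_{\max}$ for which $\sin^2\bigl((2m+1)\theta\bigr) \geq \tfrac{1}{2}$. Since $\sin^2(x) \geq \tfrac{1}{2}$ precisely when $x \in \left[\tfrac{\pi}{4},\tfrac{3\pi}{4}\right]$ (within $[0,\pi]$), the task reduces to finding some $m \leq m_{\max}$ such that $(2m+1)\theta$ lands in this interval.

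The key idea is to choose $m^\ast$ to be the smallest nonnegative integer with $(2m^\ast+1)\theta \geq \tfrac{\pi}{4}$, and then show both $m^\ast \leq m_{\max}$ and $(2m^\ast+1)\theta \leq \tfrac{3\pi}{4}$. The first bound is the easy monotonicity observation: because $\theta \geq \theta_\epsilon$, any $m$ with $(2m+1)\theta_\epsilon \geq \tfrac{\pi}{4}$ also satisfies $(2m+1)\theta \geq \tfrac{\pi}{4}$, hence the smallest such $m$ for $\theta$ cannot exceed $m_{\max}$.

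For the upper bound $(2m^\ast+1)\theta \leq \tfrac{3\pi}{4}$, I would split into two cases. If $\theta \geq \tfrac{\pi}{4}$, then $m^\ast = 0$ works, and $\theta \in \left[\tfrac{\pi}{4},\tfrac{\pi}{2}\right]$ immediately gives $\sin^2(\theta) \geq \tfrac{1}{2}$. If instead $\theta < \tfrac{\pi}{4}$, then by the minimality of $m^\ast$ we have $(2m^\ast-1)\theta < \tfrac{\pi}{4}$, so
\[
(2m^\ast+1)\theta = (2m^\ast-1)\theta + 2\theta < \tfrac{\pi}{4} + 2\theta < \tfrac{\pi}{4} + \tfrac{\pi}{2} = \tfrac{3\pi}{4},
\]
which combined with $(2m^\ast+1)\theta \geq \tfrac{\pi}{4}$ places the argument in the required interval.

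The main subtlety, and the only place one has to be careful, is the case distinction on whether $\theta < \tfrac{\pi}{4}$: without it, one cannot rule out that the sequence $(2m+1)\theta$ jumps over the window $\left[\tfrac{\pi}{4},\tfrac{3\pi}{4}\right]$ in a single step of size $2\theta$. Once this case split is in place, the argument is just monotonicity of $\sin^2$ on the relevant interval and the definition of $m_{\max}$, so no further technicalities are expected.
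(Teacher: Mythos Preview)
Your proof is correct and follows essentially the same route as the paper: contraposition, a case split on whether $\theta \geq \tfrac{\pi}{4}$, and in the second case showing that the first $m$ pushing $(2m+1)\theta$ past $\tfrac{\pi}{4}$ satisfies both $m \leq m_{\max}$ and $(2m+1)\theta \in \left[\tfrac{\pi}{4},\tfrac{3\pi}{4}\right]$. The only cosmetic difference is that the paper writes this $m$ explicitly as $\left\lceil \tfrac{1}{2}\left(\tfrac{\pi}{4\theta}-1\right)\right\rceil$ and parameterizes the overshoot by $\alpha \in [0,1)$, whereas you argue directly from minimality via $(2m^\ast-1)\theta < \tfrac{\pi}{4}$; the two arguments are equivalent.
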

\begin{proof}
    We show this by contraposition.\\
    Let $\theta \geq \theta_{\epsilon}$.\\
    Suppose $\theta \in \left[\frac{\pi}{4},\frac{\pi}{2}\right]$, then taking $m=0 \leq m_{\text{max}}$, we have $\sin^2(\theta) \geq \frac{1}{2}$.\\
    Now suppose that $\theta \in \left[0,\frac{\pi}{4}\right[$. By definition we have $m_{\text{max}} = \left\lceil\frac{1}{2}\left(\frac{\pi}{4\theta_{\epsilon}}-1\right)\right\rceil$ and we denote $m = \left\lceil\frac{1}{2}\left(\frac{\pi}{4\theta}-1\right)\right\rceil$, then $m \leq m_{\text{max}}$. Let $0 \leq \alpha < 1$ such that $m = \frac{1}{2}\left(\frac{\pi}{4\theta}-1\right) + \alpha$. Then we have $(2m+1)\theta = \frac{\pi}{4} + 2 \alpha \theta$ and $\frac{\pi}{4} \leq (2m+1)\theta \leq \frac{3\pi}{4}$. This leads to $\sin^2((2m+1)\theta) \geq \frac{1}{2}$.\\ 
    Hence Lemma \ref{lem:inf}.
\end{proof}
This result is illustrated in Figures \ref{fig:sup} and \ref{fig:inf}. 
\begin{figure}[h]
    \centering
    \begin{subfigure}[t]{0.45\textwidth}
        \centering
        \begin{tikzpicture}[scale=1.9]
            \draw (0,0) circle [radius=1.2];
            \draw (-1.2,0) -- (1.2,0);
            \draw[thick, dotted] (1.2,0) -- (1.65,0);
            \draw (0,-1.2) -- (0,1.2);
            \draw[red, thick, dashed] (0,0) -- ({1.2*cos(45)},{1.2*sin(45)});
            \draw[red, thick, dotted] ({1.2*cos(45)},{1.2*sin(45)}) -- ({1.65*cos(45)},{1.65*sin(45)});
            \draw[red, thick, dotted, ->] (1.65,0) arc (0:45:1.65) node [label={[font=\small, red, right, midway,xshift=-2]$\frac{\pi}{4}$}]{};
            \draw[green, thick] (0,0) -- ({1.2*cos(25)},{1.2*sin(25)});
            \draw[green, thick, dotted] ({1.2*cos(25)},{1.2*sin(25)}) -- ({1.45*cos(25)},{1.45*sin(25)});
            \draw[green, thick, dotted, ->] (1.45,0) arc (0:25:1.45) node [label={[font=\small, green, right, midway,xshift=-2]$\theta$}]{};
            \draw[blue, thick] (0,0) -- ({1.2*cos(10)},{1.2*sin(10)});
            \draw[blue, thick, dotted] ({1.2*cos(10)},{1.2*sin(10)}) -- ({1.25*cos(10)},{1.25*sin(10)});
            \draw[blue, thick, dotted, ->] (1.25,0) arc (0:10:1.25) node [label={[font=\small, blue, right, midway,xshift=-2]$\theta_{\epsilon}$}]{};
        \end{tikzpicture} 
        \caption{Initial state.}
    \end{subfigure}
    \begin{subfigure}[t]{0.45\textwidth}
        \centering
        \begin{tikzpicture}[scale=1.9]
            \draw (0,0) circle [radius=1.2];
            \draw (-1.2,0) -- (1.2,0);
            \draw[dotted] (1.2,0) -- (1.25,0);
            \draw (0,-1.2) -- (0,1.2);
            \draw[green, thick] (0,0) -- ({1.2*cos(75)},{1.2*sin(75)});
            \draw[green, thick, dotted] ({1.2*cos(75)},{1.2*sin(75)}) -- ({1.25*cos(75)},{1.25*sin(75)});
            \draw[green, thick, dotted, ->] (1.25,0) arc (0:75:1.25) node [label={[font=\small, green, right, midway,xshift=-2]$(2m\!\!+\!\!1)\theta$}]{};
            \draw[red, thick, dashed] (0,0) -- ({1.2*cos(45)},{1.2*sin(45)});
            \draw[blue, thick] (0,0) -- ({1.2*cos(30)},{1.2*sin(30)});
            \draw[blue, thick, dotted, ->] (0.5,0) arc (0:30:0.5) node [label={[font=\small, blue, right, midway,xshift=-3]$(2m\!\!+\!\!1)\theta_{\epsilon}$}]{};
        \end{tikzpicture}
        \caption{After $m \leq m_{\text{max}}$ steps of amplitude amplification.}
    \end{subfigure}
    \caption{Case where $\theta \geq \theta_{\epsilon}$.}
    \label{fig:sup}
\end{figure}
\begin{figure}[h]
    \centering
    \begin{subfigure}[t]{0.45\textwidth}
        \centering
        \begin{tikzpicture}[scale=1.9]
            \draw (0,0) circle [radius=1.2];
            \draw (-1.2,0) -- (1.2,0);
            \draw[thick, dotted] (1.2,0) -- (1.65,0);
            \draw (0,-1.2) -- (0,1.2);
            \draw[red, thick, dashed] (0,0) -- ({1.2*cos(45)},{1.2*sin(45)});
            \draw[red, thick, dotted] ({1.2*cos(45)},{1.2*sin(45)}) -- ({1.65*cos(45)},{1.65*sin(45)});
            \draw[red, thick, dotted, ->] (1.65,0) arc (0:45:1.65) node [label={[font=\small, red, right, midway,xshift=-2]$\frac{\pi}{4}$}]{};
            \draw[blue, thick] (0,0) -- ({1.2*cos(10)},{1.2*sin(10)});
            \draw[blue, thick, dotted] ({1.2*cos(10)},{1.2*sin(10)}) -- ({1.45*cos(10)},{1.45*sin(10)});
            \draw[blue, thick, dotted, ->] (1.45,0) arc (0:10:1.45) node [label={[font=\small, blue, right, midway,xshift=-2]$\theta_{\epsilon}$}]{};
            \draw[green, thick] (0,0) -- ({1.2*cos(7)},{1.2*sin(7)});
            \draw[green, thick, dotted] ({1.2*cos(7)},{1.2*sin(7)}) -- ({1.25*cos(7)},{1.25*sin(7)});
            \draw[green, thick, dotted, ->] (1.25,0) arc (0:7:1.25) node [label={[font=\small, green, right, midway,xshift=-2, yshift=1]$\theta$}]{};
        \end{tikzpicture} 
        \caption{Initial state.}
    \end{subfigure}
    \begin{subfigure}[t]{0.45\textwidth}
        \centering
        \begin{tikzpicture}[scale=1.9]
            \draw (0,0) circle [radius=1.2];
            \draw (-1.2,0) -- (1.2,0);
            \draw[dotted] (1.2,0) -- (1.25,0);
            \draw (0,-1.2) -- (0,1.2);
            \draw[blue, thick] (0,0) -- ({1.2*cos(50)},{1.2*sin(50)});
            \draw[blue, thick, dotted] ({1.2*cos(50)},{1.2*sin(50)}) -- ({1.25*cos(50)},{1.25*sin(50)});
            \draw[blue, thick, dotted, ->] (1.25,0) arc (0:50:1.25) node [label={[font=\small, blue, right, midway,xshift=-3]$(2m_{\text{max}}\!\!+\!\!1)\theta_{\epsilon}$}]{};
            \draw[red, thick, dashed] (0,0) -- ({1.2*cos(45)},{1.2*sin(45)});
            \draw[green, thick] (0,0) -- ({1.2*cos(35)},{1.2*sin(35)});
            \draw[green, thick, dotted, ->] (0.3,0) arc (0:35:0.3) node [label={[font=\small, green, right, midway,xshift=-2, yshift=1]$(2m_{\text{max}}\!\!+\!\!1)\theta$}]{};
        \end{tikzpicture}
        \caption{After $m_{\text{max}}$ steps of amplitude amplification.}
    \end{subfigure}
    \caption{Case where $\theta < \theta_{\epsilon}$.}
    \label{fig:inf}
\end{figure}
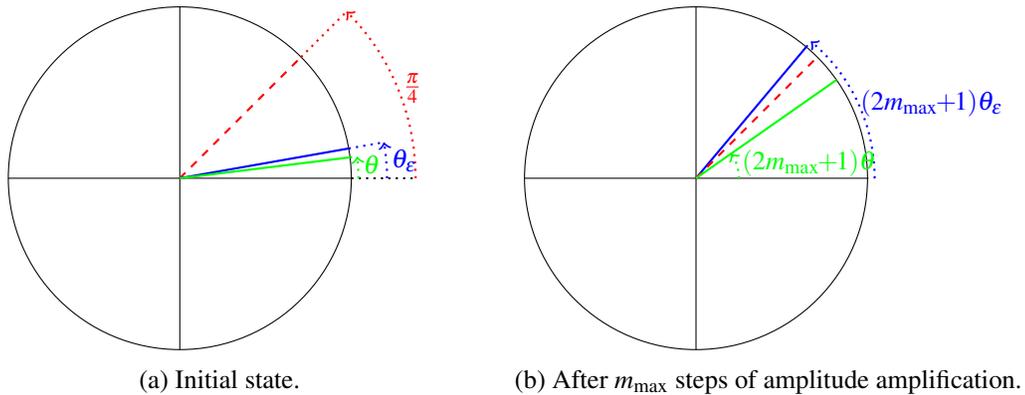
\\
The tuning algorithm works as follow. After each update of the tunable network, the error is compared to $\epsilon$ thanks to Lemma \ref{lem:inf}: after $m_0 < m_{\text{max}}$ steps of amplification, the resulting state is measured $N$ times and the number $S$ of measurements for which the ancillary qubit is in state $\ket{1}$ is counted. \\
If $S > \frac{N}{2}$, then the error is greater than $\epsilon$ and the network is updated using the measurements on the first $n$ qubits. If on the other hand $S \leq \frac{N}{2}$, the process is repeated with $m_0+1 \leq m_{\text{max}}$ steps of amplification.\\
The algorithm stops when the network reaches a state such that even after $m_{\text{max}}$ steps of amplification we have $S \leq \frac{N}{2}$. It is thus necessary to choose $N$ such that when the algorithm stops the error is most probably lower than $\epsilon$.

To avoid the  angle $(2m_{\text{max}}+1)\theta_{\epsilon}$ from overshooting $\frac{\pi}{4}$ by too much, we have chosen to limit $\epsilon$ to $\left]0,\frac{1}{10}\right[$. This limitation can be achieved without loss of generality by scaling the problem by a factor of $\frac{1}{5}$. This can be done by introducing a second ancillary qubit and applying a controlled rotation gate $\mathbf{R}$ such that $\mathbf{R}\ket{10}=\frac{2}{\sqrt{5}}\ket{10}+\frac{1}{\sqrt{5}}\ket{11}$. The states of interest are then the ones for which the two ancillary qubits are in state $\ket{11}$. This means that for $\epsilon \in \left]0,\frac{1}{2}\right[$, we are effectively working with $\epsilon' = \frac{\epsilon}{5} \in \left]0,\frac{1}{10}\right[$ as required. With this procedure, the effective error is itself limited to $\left[0,\frac{1}{5}\right]$. This also ensures that we have $\theta_{err} \in \left[0, \arcsin\left(\frac{1}{\sqrt{5}}\right)\right]$ meaning that $\theta_{err} \neq \frac{\pi}{4}$ which is a stationary point of the amplitude amplification process so the error will always be amplified.

To account for this additional procedure, the diffusion operator has to be redefined. We now denote $\mathcal{A}(h) = \big(\mathbf{I}^{\otimes n}\otimes \mathbf{R}\big)\big((\mathbf{TNN}(h)\mathbf{EX}(c,D))\otimes \mathbf{I}\big)$, $\mathcal{X}_G = \mathbf{CZ}(a_0,a_1)$, the controlled $\mathbf{Z}$ gate for which the control is the first ancillary qubit and the target is the second ancillary qubit and $\mathcal{X}_0 = \mathbf{I}-2\ket{0}\bra{0}$. These allow us to define the diffusion operator $\mathbf{Q}$ in a similar way as previously with:
$$
    \mathbf{Q}(h) = - \mathcal{A}(h)\mathcal{X}_0\mathcal{A}^{-1}(h)\mathcal{X}_G
$$
The corresponding circuit is represented in Figure \ref{fig:scale_down} and the tuning algorithm is given in Algorithm \ref{alg:tuning}.
\begin{figure}[h]
    \centering
    $$
    \Qcircuit @C=1em @R=1em{
    \lstick{\ket{0}^{\otimes n}} & \multigate{1}{\mathbf{EX}(c,D)} & \multigate{1}{\mathbf{TNN}(h)} & \qw & \multigate{2}{\mathbf{Q}^m(h)} & \qw \\
    \lstick{\ket{0}^{\phantom{\otimes n}}} & \ghost{\mathbf{EX}(c,D)} & \ghost{\mathbf{TNN}(h)} & \ctrl{1} & \ghost{\mathbf{Q}^m(h)} & \qw \\
    \lstick{\ket{0}^{\phantom{\otimes n}}} & \qw & \qw & \gate{\mathbf{R}} & \ghost{\mathbf{Q}^m(h)} & \qw
    }
    $$
    \caption{Circuit used to tune the network.}
    \label{fig:scale_down}
\end{figure}
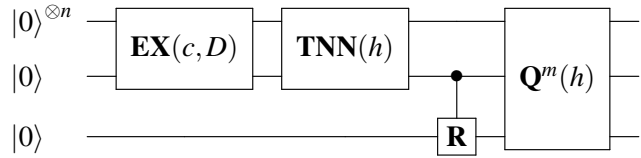

\begin{algorithm}[h]
    \caption{Tuning Algorithm}\label{alg:tuning}
    \KwData{$0< \epsilon < \frac{1}{2}$, $0 < \delta < \frac{1}{2}$, $\mathbf{EX}(c, D)$}
    \KwResult{$\mathbf{TNN}$ expressing $h^*$ such that $P(err_D(h^*,c) < \epsilon) > 1-\delta$}
    $N \gets 2\left(\left\lfloor\frac{1}{\pi\delta^2}\right\rfloor//2\right)+2$\;
    $m_{\text{max}} \gets \text{smallest integer such that } (2m_{\text{max}}+1)\arcsin(\sqrt{\frac{\epsilon}{5}}) \geq \frac{\pi}{4}$\;
    $\mathbf{TNN} \gets \mathbf{I}$\;
    $m \gets -1$\;
    $errors \gets []$\;
    \While{$m < m_{\text{max}}$ \textbf{or} $\text{length}(errors) > \frac{N}{2}$}{
        $m \gets m+1$\;
        $errors \gets []$\;
        $\mathcal{A} \gets \big(\mathbf{I}^{\otimes n}\otimes \mathbf{R}\big)\big((\mathbf{TNN}.\mathbf{EX}(c,D))\otimes \mathbf{I}\big)$\;
        $\ket{\Phi} \gets \mathcal{A}\ket{0}$\;
        $\mathbf{Q} \gets - \mathcal{A}(h)\mathcal{X}_0\mathcal{A}^{-1}(h)\mathcal{X}_G$\;
        \For{$1 \leq n \leq N$}{
            Measure $\mathbf{Q}^m\ket{\Phi}$\;
            \If{11 is measured on the ancillary qubits}{
            append the string of the first $n$ qubits to $errors$
        }
        }
        \If{$length(errors) > \frac{N}{2}$}{
            Update $\mathbf{TNN}$ according to $errors$\;
            $m \gets -1$\;
        }
    }
\end{algorithm}
In Algorithm \ref{alg:tuning} the instruction "Update $\mathbf{TNN}$ according to $errors$" is given without further specification as the way it is done might depend on the class of concepts that is being learnt. The update strategy used to learn the class of concepts introduced in Section \ref{section:particular_class} is given in Algorithm \ref{alg:update}


\section{Proof and Analysis of the Algorithm}
The algorithm stops when the network is in such a state that even after $m_{\text{max}}$ rounds of amplitude amplification, if $S$ is the number of measurements for which the ancillary qubits are in state $\ket{11}$, then $S \leq \frac{N}{2}$ where $N$ is the number of measurements. This stage is only reached if for all the $m$ in the schedule we also have $S \leq \frac{N}{2}$ after $m$ rounds of amplification.
Let $\ket{\Phi} = \mathcal{A}\ket{0}$, then it can be written:
$$
    \ket{\Phi} = \cos(\theta_e)\ket{\phi_{\perp}}+\sin(\theta_e)\ket{\phi_e}\ket{11}
$$
Where $\ket{\phi_{\perp}}$ is orthogonal to $\ket{\phi_e}\ket{11}$ and $\ket{\phi_e}$ contains the inputs that have been misclassified by the network. After $m_{\text{max}}$ rounds of amplification we have:
$$
    \mathbf{Q}^{m_{\text{max}}}\ket{\Phi} = \cos((2m_{\text{max}}+1)\theta_e)\ket{\phi_{\perp}}+\sin((2m_{\text{max}}+1)\theta_e)\ket{\phi_e}\ket{11}
$$
If $p$ is the probability of measuring $11$ on the ancillary qubits then $p = \sin^2((2m_{\text{max}}+1)\theta_e)$ and we want to compare it to $\frac{1}{2}$ in order to apply Lemma \ref{lem:inf}. This boils down to estimating $P\left(p < \frac{1}{2} \ | \ S \leq \frac{N}{2}\right)$.
We have:
$$
    P(p < 1/2 \ | \ S \leq N/2) = \frac{P(S \leq N/2 \ | \ p < 1/2)P(p < 1/2)}{P(S \leq N/2)}
$$
By placing a uniform marginal distribution on $p$ we get:
\begin{equation}\label{eq:Bayes}
    P(p < 1/2 \ | \ S \leq N/2) = \frac{\sum_{k=0}^{N/2}{{N \choose k}\int_0^{\frac{1}{2}}{\theta^k(1-\theta)^{N-k}d\theta}}}{\sum_{k=0}^{N/2}{{N \choose k}\int_0^{1}{\theta^k(1-\theta)^{N-k}d\theta}}}
\end{equation}
From now on we assume that $N$ is even. For $a \in [0,1]$ it can be shown by integrating by parts that:
$$
    \int_0^a{\theta^k(1-\theta)^{N-k}d\theta} = \frac{k!(N-k)!}{(N+1)!} - (1-a)^{N-k+1}\sum_{i=0}^k{\frac{k!(N-k)!}{(k-i)!(N-k+i+1)!}a^{k-i}(1-a)^i}
$$
This leads to:
$$
\int_0^1{\theta^k(1-\theta)^{N-k}d\theta} = \frac{k!(N-k)!}{(N+1)!}
$$
And
\begin{equation}\label{eq:0to1}
    \sum_{k=0}^{N/2}{{N \choose k}\int_0^{1}{\theta^k(1-\theta)^{N-k}d\theta}} = \frac{N+2}{2(N+1)}
\end{equation}
Similarly we have:
$$
    \int_0^{\frac{1}{2}}{\theta^k(1-\theta)^{N-k}d\theta} = \frac{k!(N-k)!}{(N+1)!} - \left(\frac{1}{2}\right)^{N+1}\sum_{i=0}^k{\frac{k!(N-k)!}{(k-i)!(N-k+i+1)!}}
$$
And
\begin{equation}\label{eq:0to1/2}
    \sum_{k=0}^{N/2}{{N \choose k}\int_0^{\frac{1}{2}}{\theta^k(1-\theta)^{N-k}d\theta}} = \frac{N+2}{2(N+1)} - \left(\frac{1}{2}\right)^{N+1}\sum_{k=0}^{N/2}{\sum_{i=0}^k{\frac{N!}{(k-i)!(N-k+i+1)!}}}
\end{equation}
Putting (\ref{eq:Bayes}), (\ref{eq:0to1}) and (\ref{eq:0to1/2}) together yields:
\begin{align}
     P(p < 1/2 \ | \ S \leq N/2) &= 1 - \left(\frac{1}{2}\right)^{N}\frac{N+1}{N+2}\sum_{k=0}^{N/2}{\sum_{i=0}^k{\frac{N!}{(k-i)!(N-k+i+1)!}}} \notag\\
     &= 1 - \left(\frac{1}{2}\right)^{N}\frac{1}{N+2}\sum_{k=0}^{N/2}{\sum_{i=0}^k{\frac{(N+1)!}{(k-i)!(N-k+i+1)!}}} \notag\\
     &= 1 - \left(\frac{1}{2}\right)^{N}\frac{1}{N+2}\sum_{k=0}^{N/2}{\sum_{i=0}^k{{N+1 \choose k-i}}} \notag\\
     &= 1 - \left(\frac{1}{2}\right)^{N}\frac{1}{N+2}\sum_{k=0}^{N/2}{\sum_{i=0}^k{{N+1 \choose i}}} \notag\\
     &= 1 - \left(\frac{1}{2}\right)^{N}\frac{1}{N+2}\sum_{k=0}^{N/2}{{N+1 \choose k}\left(\frac{N}{2}+1-k\right)}\label{eq:proba1}
\end{align}
Now
\begin{equation}\label{eq:inter1}
    \sum_{k=0}^{N/2}{{N+1 \choose k}\left(\frac{N}{2}+1-k\right)} = \frac{N+2}{2}\sum_{k=0}^{N/2}{{N+1 \choose k}} - \sum_{k=0}^{N/2}{{N+1 \choose k}k}
\end{equation}
For $k>0$ we have:
$$
    {N+1 \choose k}k = {N \choose k-1}(N+1)
$$
And $N$ being even, we also have:
$$
    \sum_{k=0}^{N/2}{{N+1 \choose k}} = 2^N
$$
Plugging these back into (\ref{eq:inter1}) we get:
$$
    \sum_{k=0}^{N/2}{{N+1 \choose k}\left(\frac{N}{2}+1-k\right)} = 2^{N-1}(N+2) - (N+1)\sum_{k=0}^{\frac{N}{2}-1}{{N \choose k}}
$$
Coming back to (\ref{eq:proba1}), we thus have:
\begin{equation}\label{eq:proba2}
    P(p < 1/2 \ | \ S \leq N/2) = \frac{1}{2} + \left(\frac{1}{2}\right)^N\frac{N+1}{N+2}\sum_{k=0}^{\frac{N}{2}-1}{{N \choose k}}
\end{equation}
Because $N$ is even:
$$
    \sum_{k=0}^{\frac{N}{2}-1}{{N \choose k}} = \frac{1}{2}\left(2^N - {N \choose N/2}\right)
$$
Together with (\ref{eq:proba2}), this leads to:
$$
    P(p < 1/2 \ | \ S \leq N/2) = \frac{1}{2} + \left(\frac{1}{2}\right)^{N+1}\frac{N+1}{N+2}\left(2^N - {N \choose N/2}\right)
$$
But \cite{Cover2005}:
$$
    {N \choose N/2} \leq \sqrt{2}\frac{2^N}{\sqrt{\pi N}}
$$
So:
$$
    P(p < 1/2 \ | \ S \leq N/2) \geq \frac{1}{2} + \frac{N+1}{2(N+2)}\left(1-\frac{\sqrt{2}}{\sqrt{\pi N}}\right)
$$
Because $\frac{N+1}{N+2} \sim 1$ we look for a lower bound of the form $1-\frac{\alpha}{\sqrt{N}}$. For example, we have:
$$
    P(p < 1/2 \ | \ S \leq N/2) > 1 - \frac{1}{\sqrt{\pi N}}
$$
So for $0 < \delta < \frac{1}{2}$, in order to have $P(p < 1/2 \ | \ S \leq N/2) > 1 - \delta$, it suffices to take for $N$ an even integer that is greater than $\frac{1}{\pi \delta^2}$, hence:
$$
    N = 2\left(\left\lfloor\frac{1}{\pi \delta^2}\right\rfloor//2\right)+2
$$
We have shown that when the algorithm stops, we have:
$$
    P\left(\sin^2\left((2m_{\text{max}}+1)\theta_e\right) < \frac{1}{2}\right) > 1 - \delta
$$
Together with Lemma \ref{lem:inf}, it comes that:
$$
    P(\theta_e < \theta_{\epsilon}) > 1 - \delta 
$$
Hence:
$$
    P(err < \epsilon) > 1- \delta
$$
So each sampling phase requires $\frac{1}{\pi \delta^2}$ calls to the oracle $\mathbf{EX}(c,d)$. In order to perform one update of the network, the algorithm requires at most $m_{\text{max}} \approx \frac{1}{2}\left(\frac{\pi}{4\arcsin(\sqrt{\epsilon/5})}-1\right)$ of these sampling phases. Now for $\epsilon \in \left]0, \frac{1}{2}\right[$, we have
$\arcsin(\sqrt{\epsilon/5}) \approx \sqrt{\epsilon/5}$ so for one update of the network, the total number of call to $\mathbf{EX}(c,d)$ is $O\left(\frac{1}{\delta^2}\frac{1}{\sqrt{\epsilon}}\right)$. The number of updates needed to reach the learning target is dependant on the class of concept. 

It is possible to reduce the number of calls to $ \mathbf{EX}(c,d)$ during one update phase by incrementing $m$ not by 1 but with powers of a given number. In this case the total number for an update is $O\left(\frac{1}{\delta^2}\log\left(\frac{1}{\epsilon}\right)\right)$ but this comes at the cost of possibly having a lower probability of success as we will see in Section \ref{section:particular_class}. 


\section{Learning a Particular Class}\label{section:particular_class}
Let $n \in \N$, for $s \in \B^n$, we define the parity function $p_s : x \mapsto s.x = s_0.x_0 \oplus \ldots \oplus s_{n-1}.x_{n-1}$ and we are interested in learning the class of the parity functions $\mathcal{C}_{p}$:
$$
    \mathcal{C}_{p} = \left\{p_s \ | \ s \in \B^n\right\}
$$
Any concept from this class can easily be expressed by a TNN: for each non-zero bit of $s$ it suffices to apply the $\mathbf{X}$ gate controlled by the corresponding qubit. In order to learn a parity function, we are applying the update strategy shown in Algorithm \ref{alg:update}. Because the gates to be updated are all controlled by a single qubit, we are assured that the final hypothesis will be a parity function.
\begin{algorithm}[h]
    \caption{Update strategy}\label{alg:update}
    \KwData{$errors$ and $corrects$ the lists of measured inputs that are respectively misclassified and correctly classified}
    \KwResult{$gates$ the list of gates to be updated}
    Gather the measurements in $errors$ by group of same Hamming weight so that $errors[i]$ is the list of misclassified inputs of Hamming weight $i$\;
    Do the same with $corrects$\;
    $gates \leftarrow errors[1]$\;
    \For{$1 \leq i \leq n-1$}{
        \For{$e$ in $errors[i]$}{
            \For{$c$ in $corrects[i+1]$}{
                \If{$e \oplus c$ has Hamming weight 1}{
                    Add $e \oplus c$ to $gates$ if it is not already in\;
                }
            }
        }
        \For{$c$ in $corrects[i]$}{
            \For{$e$ in $errors[i+1]$}{
                \If{$c \oplus e$ has Hamming weight 1}{
                    Add $c \oplus e$ to $gates$ if it is not already in\;
                }
            }
        }
    }
    Return $gates$\;
\end{algorithm}

This approach has been implemented for $n=4,6$ and 8 using the Qiskit library\footnote{The code can be found in the following repository: https://github.com/vietphamngoc/QPAC}. For a given $n$, a probability distribution over $\B^n$ has been created randomly by applying to each qubit a $\mathbf{R}_y$ rotation gate with an angle randomly chosen in $[0, \pi]$. The controlled $\mathbf{X}$ gates corresponding to the target concept are then added to the circuit. These two blocks taken together are constituting the oracle $\mathbf{EX}(c,D)$. The construction of such oracles for $n=4$ is illustrated in Figure \ref{fig:oracle}. The $\mathbf{TNN}$ has then been trained to learn the concepts of the class $\mathcal{C}_p$ according to the training algorithm and the update strategy introduced in this paper. For $n=4$ the experiments have been run for all the concepts of the class while for $n=6$ and 8 the experiments have been performed for 16 randomly selected concepts from this class. In all cases different values of $\epsilon$ and $\delta$. Once the training has been completed, the error is evaluated using the Qiskit statevector simulator by running the circuit composed of the oracle and the tuned network and getting the amplitude of all the inputs for which the network's output is wrong. 
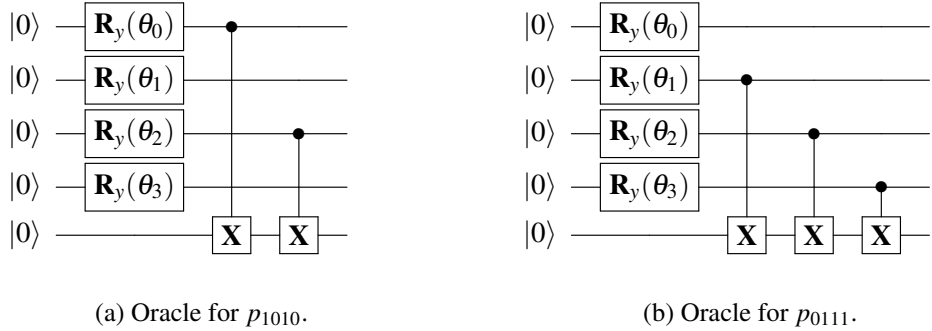
\begin{figure}[h]
    \centering
    \begin{subfigure}{0.45\textwidth}
        \centering
        $$
        \Qcircuit @C=1em @R=0.2em{
        \lstick{\ket{0}} & \gate{\mathbf{R}_y(\theta_0)}    & \ctrl{4}          & \qw               & \qw\\
        \lstick{\ket{0}} & \gate{\mathbf{R}_y(\theta_1)}    & \qw               & \qw               & \qw\\
        \lstick{\ket{0}} & \gate{\mathbf{R}_y(\theta_2)}    & \qw               & \ctrl{2}          & \qw\\
        \lstick{\ket{0}} & \gate{\mathbf{R}_y(\theta_3)}    & \qw               & \qw               & \qw\\
        \lstick{\ket{0}} & \qw                              & \gate{\mathbf{X}} & \gate{\mathbf{X}} & \qw
        }
        $$
        \caption{Oracle for $p_{1010}$.}
    \end{subfigure}
    \begin{subfigure}{0.45\textwidth}
        \centering
         $$
        \Qcircuit @C=1em @R=0.2em{
        \lstick{\ket{0}} & \gate{\mathbf{R}_y(\theta_0)}    & \qw               & \qw               & \qw               & \qw\\
        \lstick{\ket{0}} & \gate{\mathbf{R}_y(\theta_1)}    & \ctrl{3}          & \qw               & \qw               & \qw\\
        \lstick{\ket{0}} & \gate{\mathbf{R}_y(\theta_2)}    & \qw               & \ctrl{2}          & \qw               & \qw\\
        \lstick{\ket{0}} & \gate{\mathbf{R}_y(\theta_3)}    & \qw               & \qw               & \ctrl{1}          & \qw\\
        \lstick{\ket{0}} & \qw                              & \gate{\mathbf{X}} & \gate{\mathbf{X}} & \gate{\mathbf{X}} & \qw
        }
        $$
        \caption{Oracle for $p_{0111}$.}
    \end{subfigure}
    \caption{Implementation of the oracle for $n=4$ and different target concepts.}
    \label{fig:oracle}
\end{figure}

To account for the overall randomness of the process, each training has been performed 50 times and the results are represented with violin plots where the width of the plot is proportional to their distribution within the repetitions. In Figure \ref{fig:errors} are represented the results of some of these experiments. For given function and $\epsilon$, in general, as $\delta$ decreases, the distribution of the final error rates tend to accumulates toward a minimum value, which was expected. Regarding the maximum value, we notice that in our set up, it never went above $\epsilon$, except for the case $n=6$, $\epsilon=0.05$ and $\delta=0.1$ for the function $p_{010001}$ as can be seen in Figure \ref{fig:errors_6_0.05}. In this specific case the final error rate was greater than $\epsilon$ once in the 50 experiments. This event did not happen again for lower values of $\delta$. 

\begin{figure}[h]
    \centering
    \begin{subfigure}{0.49\textwidth}
        \centering
        \includegraphics[width=\textwidth]{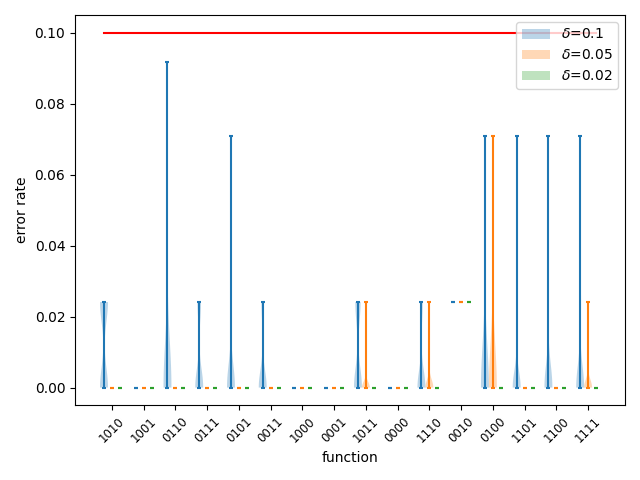}
        \caption{$n=4$ and $\epsilon=0.1$}
        \label{fig:errors_4_0.1}
    \end{subfigure}
    \hfill
    \begin{subfigure}{0.49\textwidth}
        \centering
        \includegraphics[width=\textwidth]{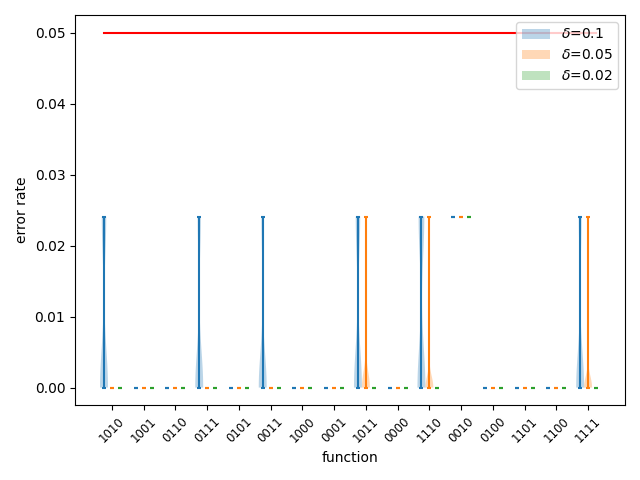}
        \caption{$n=4$ and $\epsilon=0.05$}
        \label{fig:errors_4_0.05}
    \end{subfigure}

    \begin{subfigure}{0.49\textwidth}
        \centering
        \includegraphics[width=\textwidth]{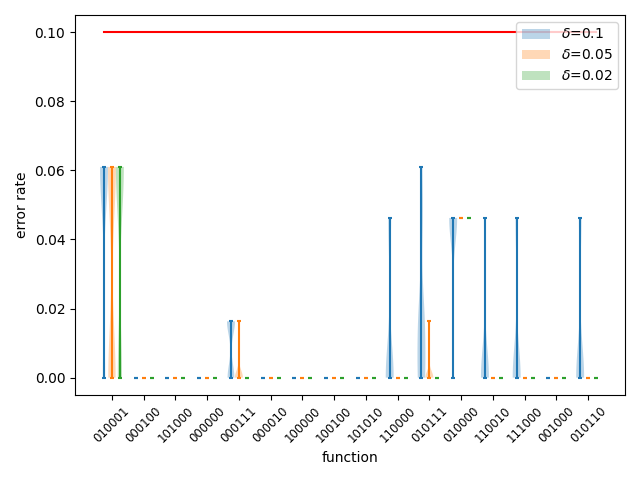}
        \caption{$n=6$ and $\epsilon=0.1$}
        \label{fig:errors_6_0.1}
    \end{subfigure}
    \hfill
    \begin{subfigure}{0.49\textwidth}
        \centering
        \includegraphics[width=\textwidth]{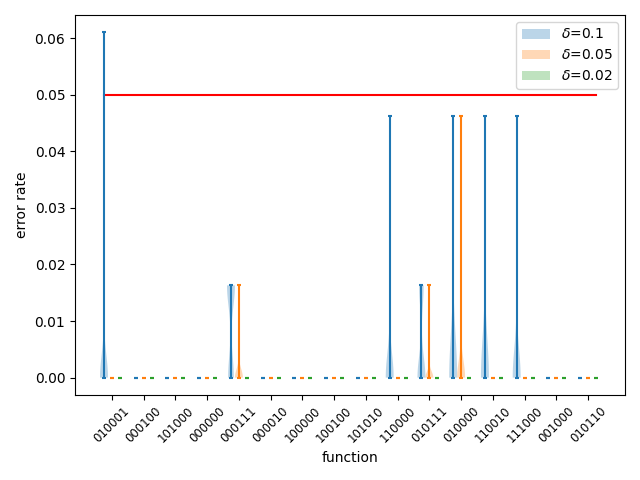}
        \caption{$n=6$ and $\epsilon=0.05$}
        \label{fig:errors_6_0.05}
    \end{subfigure}

    \begin{subfigure}{0.49\textwidth}
        \centering
        \includegraphics[width=\textwidth]{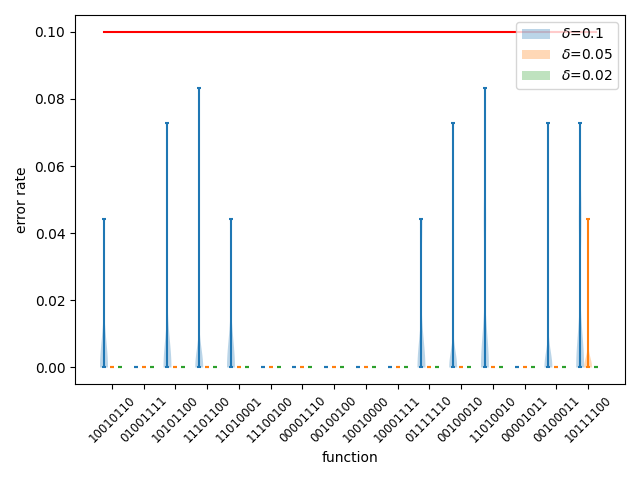}
        \caption{$n=8$ and $\epsilon=0.1$}
        \label{fig:errors_8_0.1}
    \end{subfigure}
    \hfill
    \begin{subfigure}{0.49\textwidth}
        \centering
        \includegraphics[width=\textwidth]{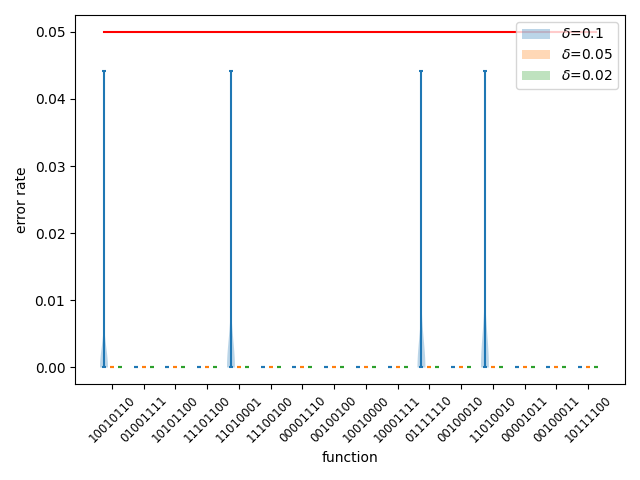}
        \caption{$n=8$ and $\epsilon=0.05$}
        \label{fig:errors_8_0.05}
    \end{subfigure}
    
    \caption{Final error rates for different values of $n$, $\epsilon$ and $\delta$. The red line represents $\epsilon$.}
    \label{fig:errors}
\end{figure}

As said previously, the number of update steps necessary until a state with an error rate lower than $\epsilon$ depends on the class of concept. In the case of $\mathcal{C}_p$ and with the experiments realised, we have found that the number of update steps decreases with decreasing $\delta$. This can be explained by the fact that decreasing $\delta$ will increase the number of samples, hence the diversity of the measurements. This in turn allows for more accurate updates which result in less updates. This trend is depicted in Figure \ref{fig:updates} where the number of updates is plotted in the cases $n = 4$ and 8, $\epsilon=0.05$ and for different values of $\delta$. Although there are few experiments, we can see that the number of updates does not significantly increases with $n$.   
\begin{figure}[ht]
    \centering
    \begin{subfigure}{0.49\textwidth}
        \centering
        \includegraphics[width=\textwidth, height=5.5cm]{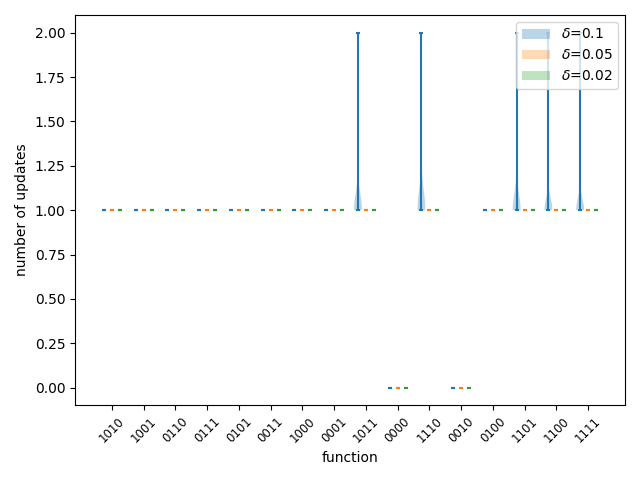}
        \caption{$n=4$}
        \label{fig:updates_4}
    \end{subfigure}
    \hfill
    \begin{subfigure}{0.49\textwidth}
        \centering
        \includegraphics[width=\textwidth, height=5.5cm]{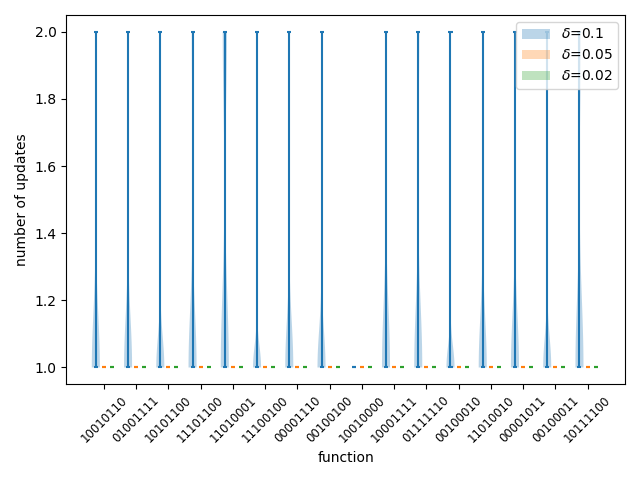}
        \caption{$n=8$}
        \label{fig:updates_8}
    \end{subfigure}
    \caption{Number of updates for different values of $n$ and $\epsilon=0.05$}
    \label{fig:updates}
\end{figure}

Finally, we wished to investigate how the increment schedule of $m$, the number of amplification round, could impact the behaviour of the tuning algorithm. In Algorithm \ref{alg:tuning}, $m$ was incremented by 1 after each sampling round. To speed-up the algorithm, the increment schedule was changed to increase $m$ with powers of 2. We have compared the two schedule for $\epsilon=0.01$, that is $m \in [0,9]$ against $m \in \{0,1,2,4,8,9\}$, in the case $n=6$ and with different values of $\delta$. As previously, for each different value of $\delta$ the experiments have been repeated 50 times. The results of these experiments, as reported in Figure \ref{fig:steps}, show that incrementing $m$ with powers of 2 yields similar performances compared to when $m$ is incremented by 1. This means that the total number of calls to the oracle could be reduced from $O\left(\frac{1}{\delta^2}\frac{1}{\sqrt{\epsilon}}\right)$ to $O\left(\frac{1}{\delta^2}\ln\left(\frac{1}{\epsilon}\right)\right)$. 

\begin{figure}[ht]
    \centering
    \begin{subfigure}{0.49\textwidth}
        \centering
        \includegraphics[width=\textwidth, height=5.5cm]{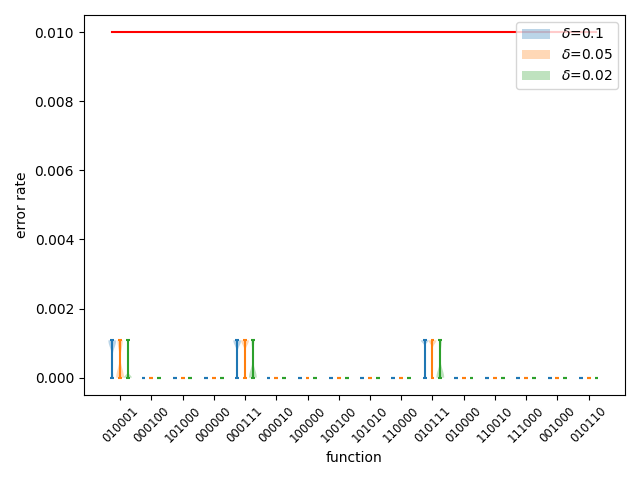}
        \caption{$m$ incremented by 1}
        \label{fig:errors_1}
    \end{subfigure}
    \hfill
    \begin{subfigure}{0.49\textwidth}
        \centering
        \includegraphics[width=\textwidth, height=5.5cm]{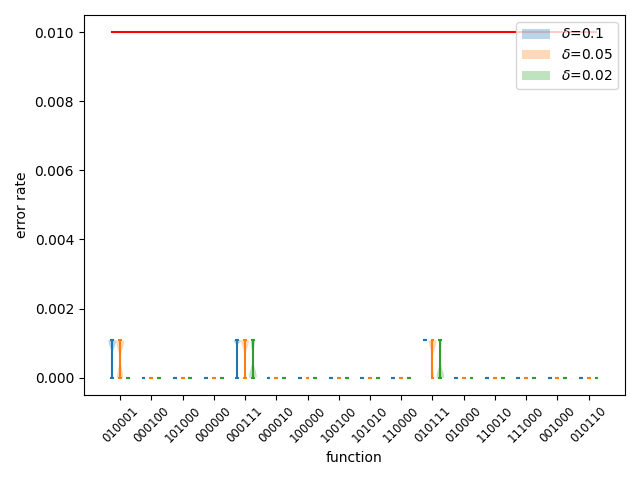}
        \caption{$m$ incremented with powers of 2}
        \label{fig:errors_2}
    \end{subfigure}
    \begin{subfigure}{0.49\textwidth}
        \centering
        \includegraphics[width=\textwidth, height=5.5cm]{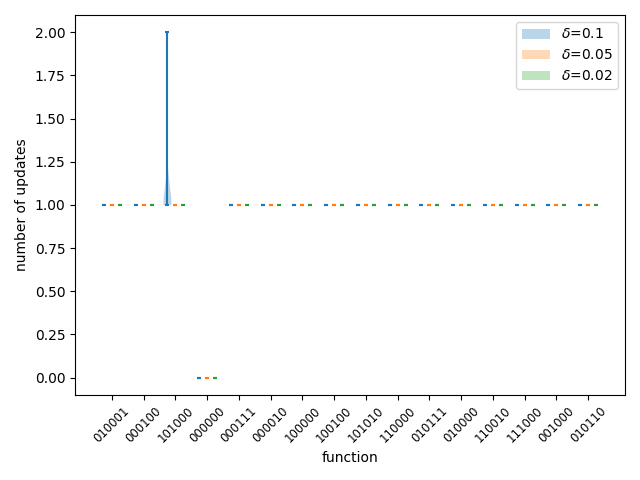}
        \caption{$m$ incremented by 1}
        \label{fig:updates_1}
    \end{subfigure}
    \hfill
    \begin{subfigure}{0.49\textwidth}
        \centering
        \includegraphics[width=\textwidth, height=5.5cm]{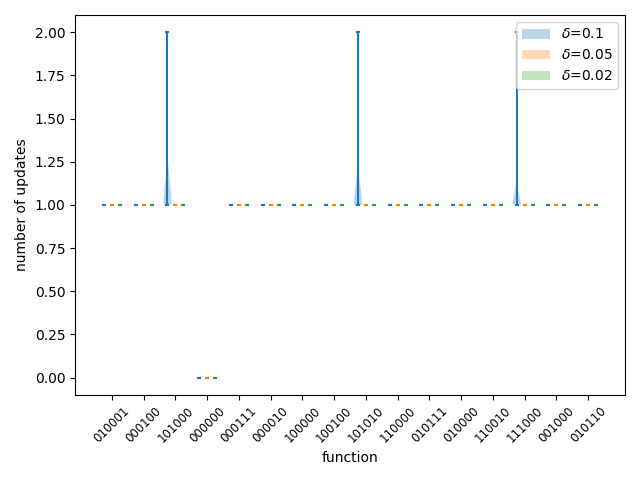}
        \caption{$m$ incremented with powers of 2}
        \label{fig:updates_2}
    \end{subfigure}
    \caption{$n=6$ and $\epsilon=0.01$. Final error rate (Figures \ref{fig:errors_1} and \ref{fig:errors_2}) and number of updates (Figures \ref{fig:updates_1} and \ref{fig:updates_2}), for different values of $\delta$, resulting from different increment schedules for $m$.}
    \label{fig:steps}
\end{figure}

\section{Conclusion}
In this paper we have studied tunable quantum neural networks in the context of QPAC-learning. To do so, we have devised and proved a learning algorithm that uses quantum amplitude amplification. Amplitude amplification is used to both compare the error rate to the threshold $\epsilon$ and to better measure the errors. These measurements are then used to update the network. We have implemented this approach and tested it against the class of parity functions and found that this algorithm is indeed an efficient learner as its sample complexity is $O\left(\frac{1}{\delta^2}\frac{1}{\sqrt{\epsilon}}\right)$ with a possible reduction to $O\left(\frac{1}{\delta^2}\ln\left(\frac{1}{\epsilon}\right)\right)$. Experimental results show that most of the final error rates are quite far away from the threshold $\epsilon$, which could be explained by an excessive number of samples. As future work we will look for a tighter lower bound for the probability of success which should reduce the dependence in $\frac{1}{\delta}$ in the algorithm's complexity. Finally we will study the generalisation of this approach to more complex classes of concepts.

\nocite{*}
\bibliographystyle{eptcs}
\bibliography{generic}
\end{document}